\newtheorem{thm}{Theorem}
\newtheorem{lemma}{Lemma}
\theoremstyle{definition}
\newtheorem{dfn}{Definition}
\newtheorem{Ex}{Example}
\theoremstyle{remark}
\newtheorem*{rem}{Remark}
\newcommand{\Z}{} \def\Z{{\mathbb Z}}
\DeclareMathOperator{\ord}{ord}
\DeclareMathOperator{\End}{End}
\title{Cohomologies of $n$-simplex relations}
\author{I.\,G. Korepanov, G.\,I. Sharygin and D.\,V. Talalaev}
\date{}
\begin{document}

\maketitle

\begin{abstract}
A theory of (co)homologies related to set-theoretic $n$-simplex relations is constructed in analogy with the known quandle and Yang--Baxter (co)ho\-mologies, with emphasis made on the tetrahedron case. In particular, this permits us to generalize Hietarinta's idea of ``permutation-type'' solutions to the quantum (or ``tensor'') $n$-simplex equations. Explicit examples of solutions to the tetrahedron equation involving nontrivial cocycles are presented.
\end{abstract}

\tableofcontents

\section{Introduction}

$n$-Simplex \emph{equations} arose as fundamental equations underlying exactly solvable models in mathematical physics. First, there appeared the 2-simplex, or \emph{triangle}, equation, which is nothing but the famous Yang--Baxter equation (YBE)~\cite{YBE1,YBE2}, and the second to appear was the 3-simplex, or \emph{Zamolodchikov tetrahedron} equation (ZTE)~\cite{ZTE}. Loosely speaking, $n$-simplex equations are those representable graphically with one $n$-simplex in its left-hand side, and the same $n$-simplex but turned ``inside out'' in its right-hand side. To be more exact, there are quite a few different types of $n$-simplex equations, of which important for us here will be:
\begin{itemize}\itemsep 0pt
 \item \emph{quantum}, or \emph{tensor}, equation, or equation relating some products of linear operators acting in the tensor product of vector spaces. See formula~\eqref{tetra} below for a quantum \emph{tetrahedron} equation,
 \item \emph{set-theoretic} equation, or equation relating the compositions of mappings of some (usually finite) sets. These compositions act in a suitable Cartesian product of the mentioned sets. Set-theoretic Yang--Baxter equation was introduced in~\cite{Dri},
 \item \emph{functional} equation. This is almost the same as set-theoretic one, except that the mappings are not required to be defined everywhere. Typically, they are given by \emph{rational functions}. The functional \emph{tetrahedron} equation (FTE), together with a rather general scheme giving its solutions, was proposed in~\cite{FTE}.
\end{itemize}

Different kinds of $n$-simplex equations prove to be closely related to each other. For instance, FTE appears to provide the most fruitful way for finding solutions of the quantum tetrahedron equation (QTE). The first example can be found already in~\cite{FTE}; also, the example~\cite{S-et-al} with \emph{positive Boltzmann weights} (so important for statistical physics) was constructed using the FTE.

\begin{rem}
Many more types of $n$-simplex equations and their classical analogues can be found in literature. For instance, we can mention the \emph{classical} Yang--Baxter equation used in the theory of (classical) solitonic equations. There are also equations in the \emph{direct sum} (rather than tensor product) of linear spaces, see again~\cite{FTE} for a Yang--Baxter direct-sum example, which is also \emph{dynamical}: the operators in its l.h.s.\ and r.h.s.\ can be different (in contrast with the ``usual'' equations, where the operators/mappings are the same, but taken in different orders). Also, our quantum, set-theoretic and functional equations correspond to coloring edges around a vertex or, in the dual picture, $(n-1)$-faces around an $n$-cube (see Section~\ref{s:n-simplex} below), while there are also options of coloring faces of other dimensions. Note in this connection that colors in Zamolodchikov's original paper~\cite{ZTE} were attached to \emph{two-faces}.
\end{rem}

An $n$-simplex equation together with operators/mappings making its solution is called $n$-simplex \emph{relation}.

Besides mathematical physics, $n$-simplex equations/relations look very important for \emph{topology}. Here it is quite enough to mention the connections between the Jones polynomial and Yang--Baxter equation, see, for instance, \cite{Turaev} and references therein.

One direction of searching for new solutions to the $n$-simplex equations is suggested, on one hand, by Hietarinta's \emph{permutation-type} solutions~\cite{Hie} and, on another hand, by the \emph{quandle cohomology}~\cite{JSC} and \emph{Yang--Baxter (co)homology}~\cite{CES}. 

The aim of the present paper is to introduce a similar (co)homological theory for general $n$-simplex relations and show how it leads to nontrivial generalizations of permutation-type solutions. Although we develop a general theory, our main attention is paid to the tetrahedron case $n=3$. Also, the examples of permutation-type solutions used in this paper are our own.

The organization of the rest of the paper is as follows:
\begin{itemize}\itemsep 0pt
 \item in Section~\ref{s:tetra}, we introduce various forms of tetrahedron equation,
 \item in Section~\ref{s:n-simplex}, we generalize the set-theoretic version to any $n$-simplex equation, and study the related ``permitted'' colorings of a ``big-dimensional'' cube,
 \item in Section~\ref{s:hom}, we define homologies and cohomologies of $n$-simplex relations, based on the mentioned ``permitted'' colorings,
 \item and finally, in Section~\ref{s:examples}, we show how our theory can give nontrivial solutions to a quantum (tensor) tetrahedron equation.
\end{itemize}

\section{Tetrahedron equations}\label{s:tetra}

\subsection{Quantum, set-theoretic and functional tetrahedron equations}\label{ss:TE}

The tetrahedron equation~(TE) arose initially (in one of its versions) in Zamo\-lod\-chi\-kov's paper~\cite{ZTE} where he considered the scattering of straight strings in $(2+1)$ dimensions. As TE is a higher-dimen\-sional analogue of Yang--Baxter equation, it is believed to find, similarly to YBE, its applications to solvable models in statistical physics and quantum field theory; also applications to topology are likely to be developed.

\paragraph{Quantum (tensor) equation.}
We write out here, first, the quantum constant version of TE:
\begin{equation}\label{tetra}
\Phi_{123}\Phi_{145}\Phi_{246}\Phi_{356}=\Phi_{356}\Phi_{246}\Phi_{145}\Phi_{123}.
\end{equation}
Here both sides are operators in the $6$th tensor degree of a linear space~$V^{\otimes 6}$; we write it as
\[
V^{\otimes 6}=V_1 \otimes V_2 \otimes V_3 \otimes V_4 \otimes V_5 \otimes V_6,
\]
where $V_1, \dots, V_6$ are six copies of~$V$; each separate~$\Phi_{ijk}$ is a copy of one and the same linear operator $\Phi\in \End(V^{\otimes 3})$ (hence the name ``constant equation''), acting nontrivially in $V_i \otimes V_j \otimes V_k$, and tensor multiplied by identity operators in the remaining~$V$'s.

Equation~\eqref{tetra} is represented graphically in Figure~\ref{f:t}.
\begin{figure}[htp]
 \begin{center}
  \includegraphics[scale=1.4]{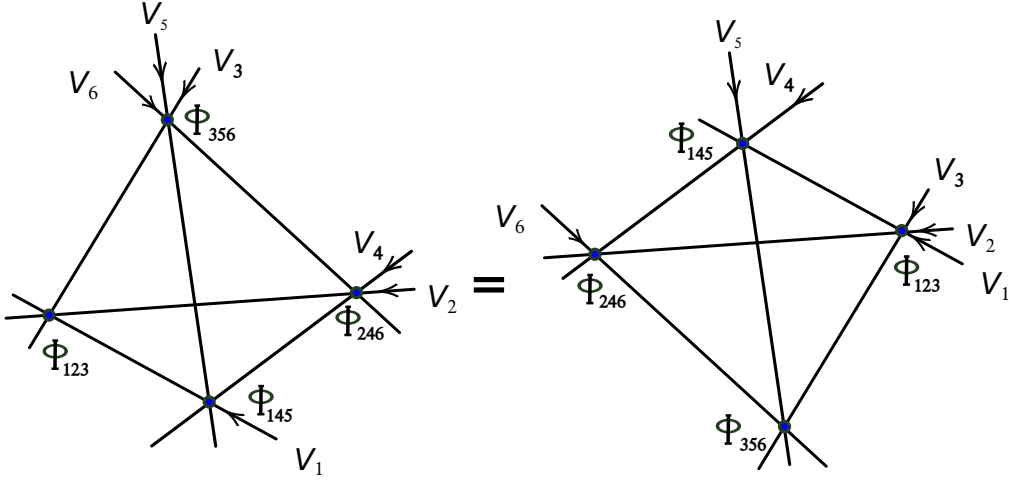}
 \end{center}
\caption{Graphical representation of the tetrahedron equation. Lines correspond to vector spaces, dots --- to operators~$\Phi_{ijk}$.}
\label{f:t}
\end{figure}

\begin{rem}
If the space~$V$ is finite-dimen\-sional and has a chosen basis --- which is very often the case --- then equation~\eqref{tetra} can be written in the ``coordinate'' form. Each operator~$\Phi_{abc}$ is represented as a tensor with three superscripts and three subscripts, and the equation reads
\begin{equation}\label{tetra-coord}
\Phi_{i_1i_2i_3}^{j_1j_2j_3} \Phi_{j_1i_4i_5}^{k_1j_4j_5} \Phi_{j_2j_4i_6}^{k_2k_4j_6} \Phi_{j_3j_5j_6}^{k_3k_5k_6} = \Phi_{i_3i_5i_6}^{j_3j_5j_6} \Phi_{i_2i_4j_6}^{j_2j_4k_6} \Phi_{i_1j_4j_5}^{j_1k_4k_5} \Phi_{j_1j_2j_3}^{k_1k_2k_3}.
\end{equation}
In~\eqref{tetra-coord}, the summation is implied over each pair of coinciding subscript and superscript.
\end{rem}

\paragraph{Set-theoretic equation.}
Our next version of TE is the \emph{set-theoretic tetrahedron equation} (STTE; compare with set-theoretic YBE~\cite{Dri,CES}). It is defined as follows: let $X$ be a set; then an STTE on~$X$ is
\begin{equation}
\label{eqtetr1}
R_{123} \circ R_{145} \circ R_{246} \circ R_{356} = R_{356} \circ R_{246} \circ R_{145} \circ R_{123}\;\colon\quad X^{\times 6}\to X^{\times 6},
\end{equation}
where $R_{ijk}$ are copies of a map
\[
X\times X\times X\stackrel{R}{\longrightarrow} X\times X\times X.
\]
Note that \eqref{eqtetr1} is represented graphically by the same Figure~\ref{f:t} (with $\Phi$ changed to~$R$ everywhere), but now we are dealing with the sixth \emph{Cartesian}, rather than tensor, degree~$X^{\times 6}$ of the \emph{set}~$X$. Of course, the subscripts indicate again the numbers of copies of~$X$ where $R$ is applied nontrivially, while it acts as the identity map on the other copies, for instance,
\begin{equation}\label{eqtetr}
\begin{aligned}
R_{356}(a_1,a_2,a_3,a_4,a_5,a_6)&=\bigl( a_1,a_2,R_1(a_3,a_5,a_6),a_4,R_2(a_3,a_5,a_6),R_3(a_3,a_5,a_6)\bigr)\\&=(a_1,a_2,a_3',a_4,a_5',a_6'),
\end{aligned}
\end{equation}
where
\begin{equation}\label{R}
R(x,y,z)=\bigl(R_1(x,y,z),R_2(x,y,z),R_3(x,y,z)\bigr)=(x',y',z').
\end{equation}

Clearly, if we consider a vector space~$V$ whose basis is our set~$X$, then the map~$R$ will generate, in an obvious way, the linear operator $\Phi=\Phi_R$ satisfying~\eqref{tetra}. Such~$\Phi$ is called a \emph{permutation-type} operator, compare~\cite{Hie}. We will consider such operators in more detail in Subsection~\ref{ss:QTE-cocycles}. On the other hand, there exist much more general solutions of QTE that cannot be reduced to an STTE.

\paragraph{Functional equation.}
Our third type of tetrahedron equation is FTE --- the functional tetrahedron equation. This is the same as STTE except that the mapping~$R$ may be defined ``almost everywhere'' rather than be a mapping between sets in the strict sense. Typically, $X$ is in this situation a field like $\mathbb R$ or~$\mathbb C$, and $R$ is a rational function. One popular solution to FTE is the following ``electric'' solution:
\begin{equation}\label{el}
\begin{aligned}
R(x,y,z)&=(x',y',z');\\
x'&=\frac {x y} {x+z+x y z},\\
y'&=x+z+x y z,\\
z'&=\frac{ yz} {x+z+x y z}.
\end{aligned}
\end{equation}
We call it ``electric'' because it comes from the well known \emph{star--triangle} transformation in electric circuits, see~\cite{FTE}.

\begin{rem}
We will show in Subsection~\ref{ss:sol} how a genuine bijective map between sets can be fabricated from the rational mapping~\eqref{el}.
\end{rem}

\subsection{Interpretation of tetrahedron equation in terms of coloring of cube faces}\label{ss:tf}

One more way to represent equation~\eqref{eqtetr} graphically is \emph{dual} to Figure~\ref{f:t} and goes as follows. Consider a $4$-dimensional cube. Its orthogonal projection onto a three-dimen\-sional space along its (body) diagonal is a rhombic dodecahedron, wherein its eight three-dimen\-sional faces become eight parallelepipeds. In this way, our rhombic dodecahedron turns out to be divided into four parallelepipeds in two different ways, see Figure~\ref{f:r}, where the indices at vertices indicate the vertex coordinates in~$\mathbb R^4$.
\begin{figure}[htp]
 \begin{center}
 \includegraphics[scale=1.3]{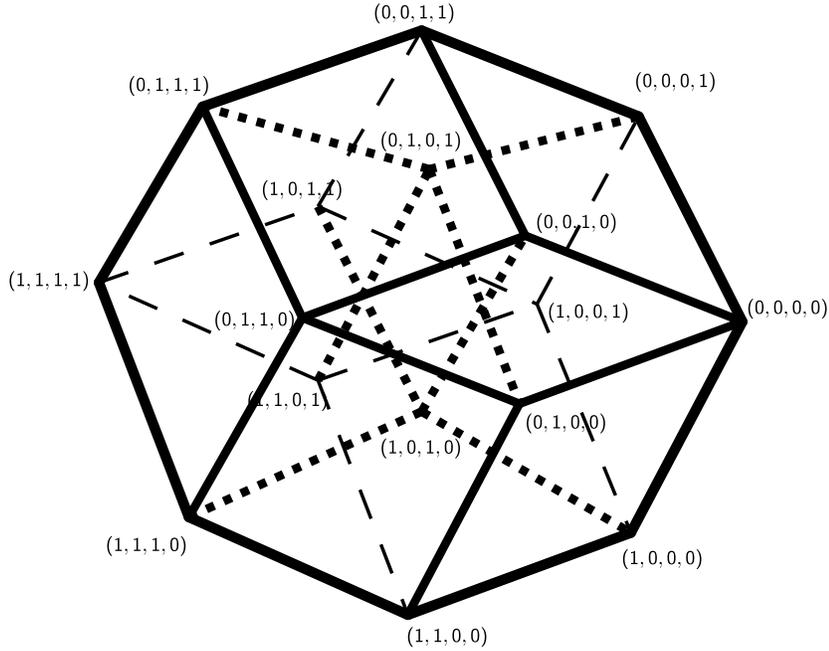}
 \end{center}
 \caption{Rhombic dodecahedron, split into parallelepipeds}
 \label{f:r}
\end{figure}

We now position our rhombic dodecahedron in the three-space in such way that it will have $6$ visible and $6$ invisible 2-faces.
Paint the visible faces in $6$ colors $a_1,\dots,a_6$ from the set~$X$ as shown in the upper part of Figure~\ref{f:f}
\begin{figure}[htp]
 \begin{center}
\includegraphics[scale=.8]{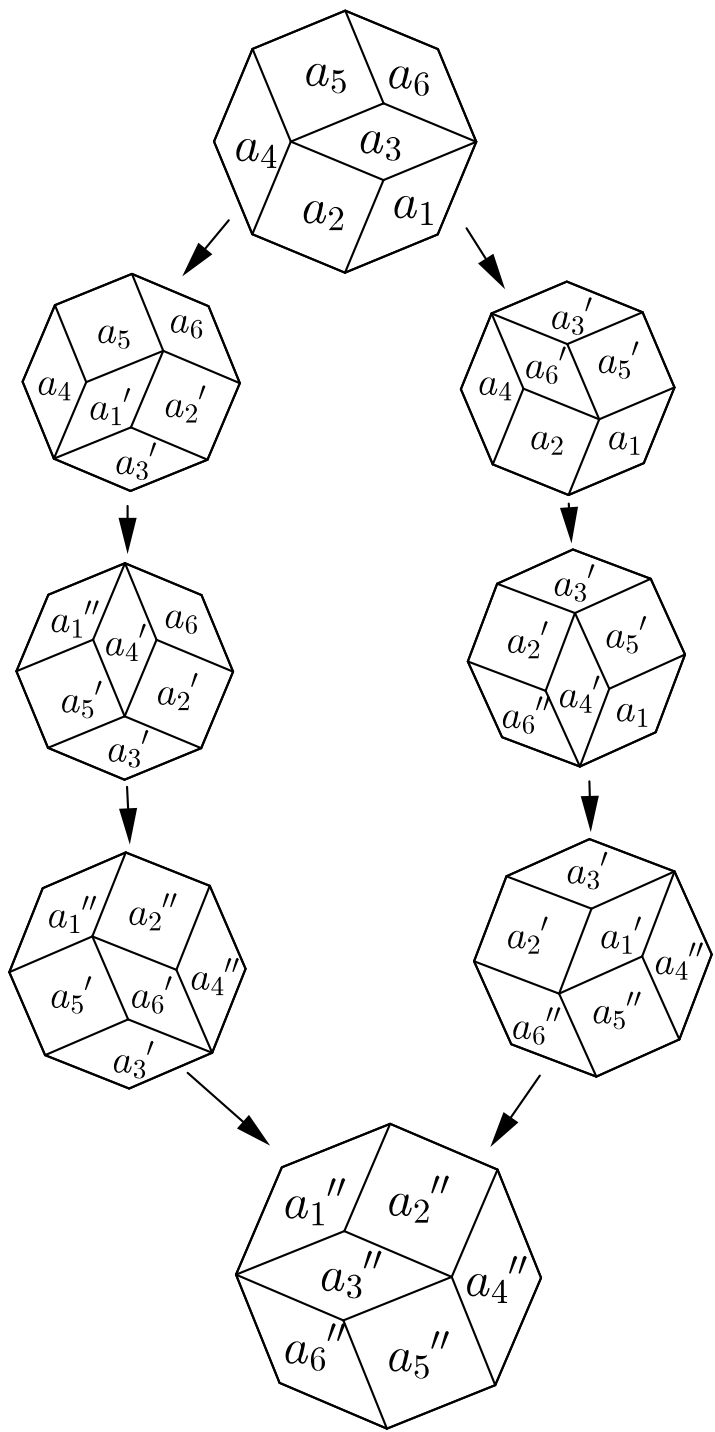}
 \end{center}
 \caption{One more graphical representation of the tetrahedron equation --- dual to Figure~\ref{f:t}}
 \label{f:f}
\end{figure}
(where a flat projection of rhombic dodecahedron is pictured, in the shape of regular octagon).

Consider one of two partitionings of the rhombic dodecahedron into parallelepipeds, depicted in Figure~\ref{f:r}. As we see, the colors of three front (visible) faces of one parallelepiped in this partitioning are $a_1,a_2,a_3$. Color then its three remaining faces with colors $a_1',a_2',a_3'$ (obtained from $a_1,a_2,a_3$ by applying the mapping~$R$, compare~\eqref{R}) respectively. Now there has appeared a parallelepiped whose three visible faces are colored with $a_1',a_4,a_5$, while its three invisible faces are not yet colored. We color them in $(a_1'',a_4',a_5')=R(a_1',a_4,a_5)$, and so on.

This process is represented in the left column in Figure~\ref{f:f} (the already used faces are not pictured). We could, however, begin the process from the parallelepiped whose visible faces have colors $a_3,a_5,a_6$, and thus obtain the sequence of transformations in the right column. It is not hard to see that the tetrahedron equation states exactly that the final result --- the coloring of the invisible part of the rhombic dodecahedron surface (shown below in Figure~\ref{f:f}) --- is the same in both cases. Besides, if we consider the coloring of all the involved 2-faces, including those which we used in passing, we get the coloring of all 2-faces of the tesseract (i.e., four-dimen\-sional cube) meeting the condition that, for any 3-face, the coloring of its three ``back'' faces is obtained by applying the mapping~$R$ onto the colors of three ``front'' faces.

Note also that we could write, instead of our equation~\eqref{eqtetr1}, any other equation obtained from it using index permutations. Indeed, the group~$S_6$ acts on the Cartesian product~$X^{\times 6}$ by permutations and, clearly, the conjugation by an element $\sigma\in S_6$ preserves the structure of equation~\eqref{eqtetr1}. We call below any such equation tetrahedron equation.

\section{General $n$-simplex relations and colorings of $N$-cube faces}\label{s:n-simplex}

Both Yang--Baxter and tetrahedron relations are particular cases of their natural generalization --- the $n$-simplex relation. In this Section, we first do some preparatory work in Subsection~\ref{ss:0}. Then we define the $n$-simplex relation, in its set-theoretic version, in Subsection~\ref{ss:ur}. This goes in the style of Subsection~\ref{ss:tf} --- using face colorings of a cube. Finally, we investigate some specific colorings of $(n-1)$-faces in a cube of ``big'' dimension~$N$ in Subsection~\ref{ss:po}.

\subsection{Cube in space~$\mathbb R^N$ and its faces}\label{ss:0}

Consider the cube~$I^N$ in the $N$-dimensional space $\mathbb R^N\ni (x_1,\dots,x_N)$. By definition, $I^N$ is given by the following system of inequalities:
\[
\begin{cases}
0\le x_1\le 1,\\
\dotfill \\
0\le x_N\le 1 .
\end{cases}
\]
Any face~$f_k\subset I^N$ of dimension~$k$, or simply $k$-face, $0\le k\le N$, will be specified, in the spirit of paper~\cite{MR}, by a sequence~$\tau_k$ of $N$ symbols $0$, $1$ or~$*$ (in~\cite{MR}, the symbol~$x$ was used instead of our asterisk). This means, by definition, that~$f_k$ is determined by the following system of equalities and inequalities in~$\mathbb R^N$:
\[
\begin{cases}
x_i=0 & \text{for all }i\text{ such that the }i\text{-th term is }0,\\
x_i=1 & \text{for all }i\text{ such that the }i\text{-th term is }1,\\
0\le x_i\le 1 & \text{for all }i\text{ such that the }i\text{-th term is }*.
\end{cases}
\]
Clearly, the total number of asterisks in such sequence is~$k$. For instance, the origin of coordinates is given by the sequence $(0\;0\;0\ldots0)$, while the whole cube~$I^N$ --- by the sequence $(*\;*\;*\;\ldots\;*)$.

We will consider the \emph{set-theoretic $n$-simplex equation}, $2\le n<N$. For this equation, faces of dimensions $n-1$, $n$ and~$n+1$ will be of special importance. Let an $n$-face~$f_n$ be given by a sequence~$\tau_n$ of $N$ elements $0$, $1$ or~$*$, as described above. Any $(n-1)$-subface $g_{n-1}\subset f_n$ is determined by the same sequence~$\tau_n$, but with one more asterisk replaced by a digit, i.e., with the equation
\begin{equation}\label{g}
x_{j_k}=0 \text{ or } 1
\end{equation}
in place of the corresponding inequality.

Denote the numbers of positions of asterisks in sequence~$\tau_n$ as $j_1<\dots<j_n$. Then, put alternating zeros and unities in correspondence to these numbers, that is, introduce the quantities
\[
\varkappa_{j_1}=0, \quad \varkappa_{j_2}=1, \quad \varkappa_{j_3}=0, \quad \text{etc.}
\]

\begin{dfn}\label{dfn:io}
If the r.h.s.\ of equation~\eqref{g} coincides with~$\varkappa_{j_k}$, then the face~$g_{n-1}$ is called \emph{incoming} for~$f_n$; if not, it is called \emph{outgoing}.
\end{dfn}

The motivation for this definition will become clear when we introduce the set-theoretic $n$-simplex equation in Subsection~\ref{ss:ur}. Note that any $n$-face of cube $I^N$,\; $N>n$, has $2n$ subfaces of dimension~$(n-1)$, of which $n$ are incoming and $n$ are outgoing.

\begin{dfn}\label{dfn:o}
For an $(n-1)$-face $g_{n-1}\subset I_N$, its \emph{order} $\ord g_{n-1}$ is the number of $n$-faces~$f_n$ containing it, $g_{n-1}\subset f_n \subset I_N$, and such that $g_{n-1}$ is incoming for~$f_n$.
\end{dfn}

Next comes one more important definition:

\begin{dfn}\label{dfn:abs}
An $(n-1)$-face is called \emph{absolutely incoming} if it has the maximal possible order $N-n+1$, i.e., it is incoming for all $n$-faces containing it. An $(n-1)$-face of order~$0$ is called \emph{absolutely outgoing}.
\end{dfn}

\begin{lemma}\label{l:abs}
The $N$-cube has exactly $\binom{N}{n-1}$ absolutely incoming, as well as $\binom{N}{n-1}$ absolutely outgoing, $(n-1)$-faces.
\end{lemma}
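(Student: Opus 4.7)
The plan is to parametrize $(n-1)$-faces by their $\{0,1,*\}$-sequences and observe that the ``absolutely incoming'' condition, applied position-by-position, rigidly determines each digit once the pattern of asterisks is fixed. This will reduce the count to choosing the $n-1$ asterisk positions, giving $\binom{N}{n-1}$ directly.

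More concretely, I would fix an $(n-1)$-face $g$ with asterisks at positions $i_1<\dots<i_{n-1}$ and digits $\epsilon_p\in\{0,1\}$ at the remaining positions $p$. An $n$-face $f_n\supset g$ is obtained by replacing exactly one such digit, at some position $p$, by $*$; there are $N-n+1$ choices of $p$, matching the count in Definition~\ref{dfn:abs}. In the resulting $n$-face, the asterisks are $\{i_1,\dots,i_{n-1}\}\cup\{p\}$ sorted; if $m$ denotes the number of $i_\ell$ with $i_\ell<p$, then $p$ occupies position $k=m+1$ in this sorted list, so $\varkappa_p=0$ when $m$ is even and $\varkappa_p=1$ when $m$ is odd. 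By Definition~\ref{dfn:io}, $g$ is incoming for $f_n$ precisely when $\epsilon_p=\varkappa_p$, i.e.\ when $\epsilon_p\equiv m\pmod 2$.

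The key observation is that this condition for $g$ to be incoming involves only $p$ and the asterisk positions $i_1,\dots,i_{n-1}$, not the other digits. Hence the requirement that $g$ be incoming for \emph{every} containing $n$-face imposes independent constraints at each non-asterisk position $p$, each of which pins down $\epsilon_p$ to a unique value $m\bmod 2$. So once the set of asterisk positions is chosen, there is exactly one absolutely incoming $(n-1)$-face with that asterisk pattern; conversely, any choice of $n-1$ positions for asterisks gives rise to a well-defined sequence, hence to an absolutely incoming face. The counting bijection with $(n-1)$-element subsets of $\{1,\dots,N\}$ yields $\binom{N}{n-1}$.

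The absolutely outgoing case is entirely parallel: the requirement becomes $\epsilon_p\not\equiv m\pmod 2$ for every non-asterisk $p$, which again uniquely determines each $\epsilon_p$ from the asterisk positions, giving the same count. I do not expect any real obstacle here; the only point to double-check is that the per-position constraints are genuinely independent and always consistent (they are, since each $\epsilon_p$ is constrained exactly once by a single parity condition depending only on the fixed asterisk pattern).
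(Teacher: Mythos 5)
Your proof is correct and takes essentially the same route as the paper's: the paper also fixes the set of asterisk positions (calling it an ``$(n-1)$-direction'') and argues that the incoming condition with respect to each containing $n$-face uniquely determines the digit at the corresponding non-asterisk position, yielding exactly one absolutely incoming (resp.\ outgoing) face per choice of $n-1$ positions. You merely make explicit the parity computation ($\epsilon_p\equiv m\bmod 2$) that the paper states without detail.
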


\begin{proof}
We call any collection of $n-1$ basis vectors in~$\mathbb R^N$ \emph{$(n-1)$-direction}. Clearly, any $(n-1)$-face in $N$-cube is parallel to exactly one $(n-1)$-direction; there are $\binom{N}{n-1}$ such directions, and there are $2^{N-n+1}$ faces parallel to a given $(n-1)$-direction. 

We are going to show that, for any $(n-1)$-direction, there exists exactly one absolutely incoming (or absolutely outgoing) $(n-1)$-face parallel to it. Indeed, a face is parallel to a direction $e_{i_1},\dots,e_{i_{n-1}}$ provided the asterisks~$\ast$ in the corresponding sequence~$\tau_{n-1}$ stand at the positions $i_1,\dots,i_{n-1}$ (and only there). Any $n$-face containing a given $(n-1)$-face is characterized by the fact that the corresponding sequence~$\tau_n$ has, in addition to the asterisks in~$\tau_{n-1}$ (sequence for the mentioned $(n-1)$-face), one more asterisk that can stand at any of the remaining $N-n+1$ positions. The condition that a given $(n-1)$-face is incoming (or outgoing) for an $n$-face containing it, determines unambiguously the symbol standing in~$\tau_{n-1}$ at the position of the ``extra'' asterisk in~$\tau_n$. Thus, the sequence~$\tau_{n-1}$ is restored unambiguously from the set of indices $i_1,\dots,i_{n-1}$ and the condition of being absolutely incoming (or outgoing).
\end{proof}

\subsection{The $n$-simplex equation and its geometrical meaning}\label{ss:ur}

We are going to give a ``geometrical'' definition of $n$-simplex equation, in analogy with that for tetrahedron equation in Subsection~\ref{ss:tf}. We start with the definition of \emph{coloring} of the $N$-cube.

\begin{dfn}\label{dfn:clr}
\emph{Coloring} of $(n-1)$-faces in cube~$I^N$ is a mapping from the set of these faces into a fixed finite set~$X$ --- \emph{color set}. We call a coloring of $(n-1)$-faces $(n-1)$-coloring or even simply coloring, if the value of~$n$ is clear from the context.
\end{dfn}

Now choose an arbitrary $n$-face~$f_n\subset I^N$; it contains $n$ incoming and $n$ outgoing $(n-1)$-faces (see Definition~\ref{dfn:io} and the following Remark). So, a coloring of either all incoming or all outgoing faces is given by an element of the Cartesian degree~$X^{\times n}$.

\begin{dfn}\label{dfn:R}
\emph{Set-theoretic $R$-operator} is a mapping
\[
R\colon\; X^{\times n} \to X^{\times n}
\]
producing the colors of outgoing $(n-1)$-faces in~$f_n$ if the colors of incoming faces are given.
\end{dfn}

With such an operator, we define special colorings of a ``big'' cube:

\begin{dfn}\label{dfn:prm}
\emph{Permitted coloring} of $(n-1)$-faces in the cube~$I^N$ with respect to a given $R$-operator is a coloring where the colors of outgoing faces in any $n$-(sub)cube~$f_n\subset I^N$ are obtained from the ``incoming'' colors using the operator~$R$.
\end{dfn}

For the set of permitted colorings to be non-empty, conditions on different $n$-faces must not contradict each other. It turns out that just one condition on the $R$-operator is enough for this, and this is exactly what we will call the set-theoretical $n$-simplex equation. 

To introduce this equation, consider an $(n+1)$-cube~$I^{n+1}$.  Build the following directed graph, which we will call~$G_{n+1}$: first, take all $n$-faces of~$I^{n+1}$ as its \emph{vertices}. Then, consider those $(n-1)$-faces that are outgoing for one $n$-face and incoming for another, and take them as the graph~$G_{n+1}$ \emph{edges}, joining the two mentioned $n$-faces and directed from the first of them to the second. It turns out that~$G_{n+1}$ consists of two $n$-simplices, one of them ``turned inside out'' with respect to the other (compare Figure~\ref{f:t}), as the following theorem states. We continue to use the notations introduced in Subsection~\ref{ss:0}, writing the faces of~$I^{n+1}$ as sequences of $n+1$ symbols $0$, $1$ or~$*$. 

\begin{thm}\label{thm:n-simp}
Graph~$G_{n+1}$ consists of two connected components, each being an $n$-simplex. To one of them --- we will call it the \emph{left-hand-side} component --- belong the $n$-faces (graph vertices) having zero at an odd position or unity at an even position:
\begin{equation}
\label{left}
(0***\ldots*),\quad ({}*1**\ldots*),\quad (**0*\ldots*),\quad \ldots \;.
\end{equation}
The rest of $n$-faces, i.e.,
\begin{equation}
\label{right}
(1***\ldots*),\quad ({}*0**\ldots*),\quad (**1*\ldots*),\quad \ldots \;,
\end{equation}
belong to the other component, called the \emph{right-hand-side} component.

Moreover, let us say that $a<b$ for two vertices $a,b\in G_{n+1}$ if one can get from $a$ to~$b$ along (a sequence of) \emph{directed} edges. Then $<$ makes a linear order on any of the connected components, namely,
\begin{equation}\label{l<}
(0***\ldots*) \,<\, ({}*1**\ldots*) \,<\, (**0*\ldots*) \,<\, \ldots
\end{equation}
in the l.h.s., and
\begin{equation}\label{r>}
(1***\ldots*) \,>\, ({}*0**\ldots*) \,>\, (**1*\ldots*) \,>\, \ldots
\end{equation}
in the r.h.s.
\end{thm}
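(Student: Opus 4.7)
The plan is to parametrize each $n$-face of $I^{n+1}$ as $F_i^v$, where $i\in\{1,\dots,n+1\}$ is the unique non-asterisk position and $v\in\{0,1\}$ its value, giving the $2(n+1)$ vertices of $G_{n+1}$. I would split the vertex set into two halves according to the parity condition $v\equiv i\pmod 2$ (call this the \emph{right-hand-side}, RHS) versus $v\not\equiv i\pmod 2$ (the \emph{left-hand-side}, LHS); direct inspection of \eqref{left} and \eqref{right} confirms that these are precisely the two families listed in the theorem. All the remaining work is to pin down the directed edges.

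The first technical step is a formula for when an $(n-1)$-subface is incoming. For $F_i^v$ the asterisks occupy the positions $\{1,\dots,n+1\}\setminus\{i\}$, and position $j\neq i$ is the $k$-th asterisk with $k=j$ if $j<i$ and $k=j-1$ if $j>i$. Since $\varkappa$ alternates $0,1,0,\dots$ starting from $0$, the subface of $F_i^v$ obtained by replacing the asterisk at $j$ with $w$ is incoming iff
\[
w\equiv j-1\pmod 2\ \text{when $j<i$,}\qquad w\equiv j\pmod 2\ \text{when $j>i$.}
\]

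Next I would apply this to an arbitrary $(n-1)$-face of $I^{n+1}$: it has exactly two digit positions $a<b$ with values $v_a,v_b$, and is a subface of precisely the two $n$-faces $F_a^{v_a}$ and $F_b^{v_b}$. The criterion shows it is incoming for $F_a^{v_a}$ iff $v_b\equiv b\pmod 2$, and incoming for $F_b^{v_b}$ iff $v_a\not\equiv a\pmod 2$. Hence an edge of $G_{n+1}$ appears in exactly two symmetric situations: case~(i), $v_a\equiv a$ and $v_b\equiv b\pmod 2$, which produces the directed edge $F_b^{v_b}\to F_a^{v_a}$; and case~(ii), $v_a\not\equiv a$ and $v_b\not\equiv b\pmod 2$, which produces $F_a^{v_a}\to F_b^{v_b}$. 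The remaining two combinations give absolutely incoming and absolutely outgoing $(n-1)$-faces respectively and contribute no edge at all.

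Finally, case~(i) forces both endpoints into the RHS family and case~(ii) into the LHS family, so no edge crosses between them; each family contains exactly one $F_i^{v_i}$ for every position $i$, so $n+1$ vertices in all. Because on a fixed side the parity condition is automatic, \emph{every} pair $(a,b)$ yields an edge, and each component is the complete directed graph on $n+1$ vertices, i.e., an $n$-simplex. The prescribed orientations---from the $b$-vertex to the $a$-vertex in the RHS and from the $a$-vertex to the $b$-vertex in the LHS---show that the reachability relation $<$ coincides with the position order on the LHS and with its reverse on the RHS, giving precisely \eqref{l<} and \eqref{r>}. The only genuine obstacle in all this is the bookkeeping: aligning the shift between the asterisk-index $k$ and the ambient coordinate $j$ and keeping the parities consistent under that shift.
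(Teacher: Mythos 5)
Your proof is correct and follows essentially the same route as the paper's: both split the $n$-faces by the parity rule (your $v\equiv i$ vs.\ $v\not\equiv i \pmod 2$ is the paper's ``obtained from $(0\,1\,0\,1\dots)$ or $(1\,0\,1\,0\dots)$ by starring out symbols''), and both classify each two-digit $(n-1)$-face as incoming/outgoing for its two parent $n$-faces via Definition~\ref{dfn:io} to locate and orient the edges. The only difference is that you carry out explicitly, with the asterisk-rank bookkeeping, the case analysis that the paper leaves as ``an easy exercise.''
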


The simplest example --- graph~$G_3$, consisting of two triangles --- is pictured in Figure~\ref{f:c}.
\begin{figure}[ht]
 \begin{center}
  \includegraphics[scale=1.5]{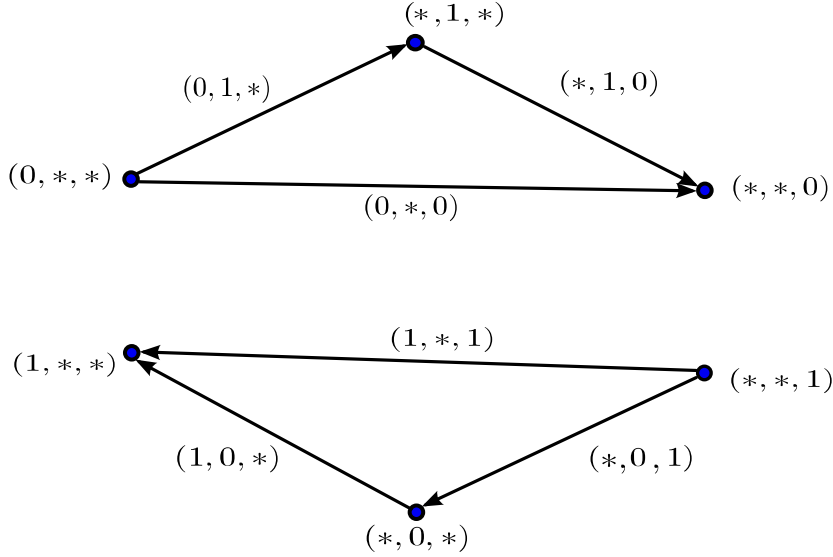}
  \caption{Incoming and outgoing edges in cube~$I^3$. Sequences encoding edges are written near the corresponding arrows; graph vertices denote 2-faces.}
  \label{f:c}
 \end{center}
\end{figure}

\begin{proof}
Consider the set~\eqref{left} of graph~$G_{n+1}$ vertices. There are clearly $n+1$ of them. First, we show that, for any two of them, call them $a$ and~$b$, if $b$ is to the right of~$a$ in~\eqref{left} (although not necessarily its nearest neighbor!), then there exists an edge joining $a$ and~$b$ directly and pointing from $a$ to~$b$. Indeed, one can check that the \emph{intersection of $n$-faces} $a$ and~$b$ is exactly an $(n-1)$-face outgoing for~$a$ and incoming for~$b$ --- this becomes an easy exercise if we note that the sequences in~\eqref{left} and their $(n-1)$-faces are obtained from the sequence $(0\;1\;0\;1\;\dots)$ by replacing all the symbols, except one or two, with~$\ast$, and then directly apply Definition~\ref{dfn:io}.

So, restricting the set of vertices to those in~\eqref{left} and considering only edges joining these vertices, we have an $n$-simplex, as desired. Similar reasoning is valid, of course, also for the set~\eqref{right}.

What remains to prove is that there are no more edges (that would join the two already obtained $n$-simplices). Indeed, taking now $n$-face~$a$ from~\eqref{left} and $n$-face~$b$ from~\eqref{right} (obtained from the sequence $(0\;1\;0\;1\;\dots)$ or, respectively, $(1\;0\;1\;0\;\dots)$ by replacing all the symbols, except one, with~$\ast$) and applying again Definition~\ref{dfn:io}, one can check that $a\cap b$ is either incoming or outgoing for \emph{both} $a$ and~$b$.
\end{proof}

\begin{dfn}\label{dfn:n-simp}
\emph{Set-theoretic (constant) $n$-simplex equation} on a color set~$X$ is the following equality between two compositions of $R$-operators, acting in the increasing order (from right to left!) in the sense of \eqref{l<} and~\eqref{r>}:
\begin{equation}\label{eq:n-simp}
\cdots \circ R_{(**0*\ldots*)} \circ R_{({}*1**\ldots*)} \circ R_{(0***\ldots*)} =
 R_{(1***\ldots*)} \circ R_{({}*0**\ldots*)} \circ R_{(**1*\ldots*)} \circ \cdots \,.
\end{equation}
Each $R$ is a copy of the same $R$-operator but acting on the colors within the $n$-face in its subscript.
\end{dfn}

\begin{rem}
Thus, the l.h.s.\ of the set-theoretic $n$-simplex equation corresponds graphically to the left-hand-side component of~$G_{n+1}$, and the r.h.s.\ similarly to the right-hand-side component.
\end{rem}

The following geometric interpretation can also be given to the definitions in this Subsection.

We define the \emph{orientation} of any $k$-face in cube~$I^N$, assuming that its edges go along the axes $x_{j_1},\dots,x_{j_k}$, with $j_1<\dots<j_k$, as follows: it is determined by the $k$-tuple $\mathsf e_{j_1},\dots,\mathsf e_{j_k}$ of unit vectors going along these axes and taken exactly in this order of increasing indices. To be more exact, we identify $\mathsf e_{j_1},\dots,\mathsf e_{j_k}$ with the standard basis $\mathsf f_{1},\dots,\mathsf f_{k}$ in~$\mathbb R^k$, and thus carry over the standard orientation of~$\mathbb R^k$ onto the face. Any subface $I^l\subseteq I^k\subseteq I^N$ gets thus two orientations: one from~$I^N$ and one from~$I^k$; these orientations, of course, coincide.

As $\mathbb R^N$ has its standard orientation as well, choosing face orientation is equivalent to choosing an orientation in the complementary space. Below we assume that we compose the basis in~$\mathbb R^N$, taking first the vectors $\mathsf e_{j_1},\dots,\mathsf e_{j_k}$, and then the basis in the complementary space; this must give the correct orientation of~$\mathbb R^N$. In particular, if $k=N-1$, then the face orientation is equivalent to choosing a direction of the normal to the face. As the orientations of opposite faces clearly coincide, we see that one half of normals to the $n$-faces of $(n+1)$-cube look into that cube, while the other half look outside. We then say that faces looking inside correspond, by definition, to the l.h.s.\ of our equation, while those looking outside --- to the r.h.s.

Then, we define, for every $n$-face (recall that it must correspond to one $R$-operator) \emph{its incoming and outgoing} $(n-1)$-faces. We do it along the same lines as in the previous paragraph: those whose normals (drawn within the $n$-face under consideration) point inside are incoming, while the rest are outgoing. Important is that the orientations of all $n$-faces in either side of our equation are obviously \emph{consistent}! And this means that if an $(n-1)$-face is common for two $n$-faces, then it is incoming for one of them and outgoing to the other. It is not hard to check that this way we can come to the definition of $n$-simplex equation equivalent to the above.

\subsection{Partial order on $n$- and $(n-1)$-faces}\label{ss:po}

The main result of this Subsection is the following theorem, forming the basis for homological constructions in Section~\ref{s:hom}.

\begin{thm}\label{thm:vse}
Assume that $X$ is a set, equipped with an operation $R\colon \; X^{\times n}\to X^{\times n}$, for which the $n$-simplex equation holds. Then for every $X$-coloring of absolutely incoming $(n-1)$-faces of $I^N$ there is a unique permitted $X$-coloring of all $(n-1)$-faces of the $N$-cube.
\end{thm}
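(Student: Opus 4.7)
My plan is to construct the permitted coloring by processing the $n$-faces of $I^N$ one at a time, in a linear order extending a natural partial order, and to use the $n$-simplex equation inside $(n+1)$-subcubes to guarantee well-definedness.

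Define a relation $\prec$ on the $n$-faces of $I^N$ by declaring $f\prec f'$ whenever $f\cap f'$ is an $(n-1)$-face which is outgoing for $f$ and incoming for $f'$, and taking the transitive closure. First I would check that $\prec$ is acyclic --- the natural monovariant comes from the alternating-parity rule of Definition~\ref{dfn:io} together with the orientation picture at the end of Subsection~\ref{ss:ur} --- so that $\prec$ is a genuine partial order; its minimal elements are precisely those $n$-faces all of whose incoming $(n-1)$-subfaces are absolutely incoming in $I^N$. Pick any linear extension $f^{(1)}\prec\cdots\prec f^{(M)}$ and process the $n$-faces in this order: at step $i$, every incoming $(n-1)$-subface of $f^{(i)}$ has already been colored (either because it is absolutely incoming, or because it is an outgoing subface of some earlier $f^{(k)}$), so we apply $R$ at $f^{(i)}$ to color its outgoing subfaces. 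After all $M$ steps every $(n-1)$-face has received a color and the $R$-constraint at every $n$-face holds by construction.

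Consistency and independence of the chosen extension --- that multiple applications of $R$ to the same $(n-1)$-face must yield matching colors, and that the final coloring does not depend on the linear extension used --- both reduce to the following local claim: two linear extensions of $\prec$ differing by a single adjacent transposition produce the same coloring. If the swapped $n$-faces share no $(n-1)$-subface, the two $R$-operators act on disjoint data and commute trivially. Otherwise they lie in a unique $(n+1)$-subcube $C\subset I^N$, and the restriction of $\prec$ to the $n$-faces of $C$ admits exactly the two linear extensions identified by Theorem~\ref{thm:n-simp}, namely those dictated by~\eqref{l<} and~\eqref{r>}. A chain of adjacent transpositions connects any two such extensions, so moving between them amounts precisely to passing from one side of~\eqref{eq:n-simp} to the other, whose outputs agree by the $n$-simplex equation.

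The main obstacle is the acyclicity of $\prec$ combined with the matching of the two local linear extensions inside $C$ with those of Theorem~\ref{thm:n-simp}: one must verify that the incoming/outgoing status of an $(n-1)$-subface $g\subset f\subset C$ relative to $f$ is intrinsic to $f$ (it depends only on the rank of the asterisk being fixed in $f$'s defining sequence, and not on whether $f$ is viewed inside $C$ or inside $I^N$), so that the combinatorics in $C$ match those of Theorem~\ref{thm:n-simp} verbatim. Granted this, uniqueness of the permitted coloring follows because every $R$-application in the algorithm is forced by the constraint at the corresponding $n$-face, and existence is exactly what the algorithm outputs.
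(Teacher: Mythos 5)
Your overall strategy (order the $n$-faces, propagate with $R$, localize conflicts to $(n+1)$-subcubes) is the same as the paper's, but the step where you actually have to use the $n$-simplex equation has a genuine gap. The crux of the theorem is \emph{consistency}: an $(n-1)$-face $h$ can be outgoing for two incomparable $n$-faces $f$ and $f'$, so both write a color to $h$, and one must show the two writes agree. You reduce this to the claim that two linear extensions differing by an adjacent transposition yield the same coloring, but a single transposition of $f$ and $f'$ does not invoke the $n$-simplex equation at all: it merely changes which of the two writes to $h$ happens last, while the two written values are computed from exactly the same (already fixed) inputs in either order. Relatedly, your assertion that $\prec$ restricted to the $2(n+1)$ $n$-faces of the subcube $C$ ``admits exactly the two linear extensions'' of Theorem~\ref{thm:n-simp} is false --- by that theorem the restriction is a disjoint union of two chains of length $n+1$, which has $\binom{2n+2}{n+1}$ linear extensions --- and passing from \eqref{l<} to \eqref{r>} is a wholesale reversal involving all $2(n+1)$ operators, not an adjacent swap. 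So the transposition mechanism never ``amounts to passing from one side of \eqref{eq:n-simp} to the other,'' and the conflict at $h$ is left unresolved.

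What is actually needed, and what the paper supplies, is two further ingredients. First, an argument that inside the unique $(n+1)$-cube $C$ spanned by $f$ and $f'$, every absolutely incoming $(n-1)$-face of $C$ strictly precedes $h$ in the global order (the paper's Lemma~\ref{l:n+1}); combined with an induction on the partial order of $(n-1)$-faces (not on linear extensions of $n$-faces), this guarantees that all the data feeding both computations of the color of $h$ is already consistently determined. Only then does one application of the full equation \eqref{eq:n-simp} in $C$ --- comparing the entire left-hand chain against the entire right-hand chain, since $f$ and $f'$ lie in different components of $G_{n+1}$ --- force the two writes to $h$ to coincide. Second, the acyclicity of your order $\prec$, which you only gesture at, is proved in the paper by an explicit monovariant (the leftmost coordinate that ever changes cannot return to its original value; Lemma~\ref{l:acycl}); some such argument is required, since the alternating-parity rule alone does not obviously exclude long cycles through many $n$-faces.
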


\begin{proof}
Consider the following directed (i.e every edge bears ``an arrow'') graph $\Gamma_N$: its vertices correspond to $n$\/-dimensional and $(n-1)$\/-dimensional faces of the cube $I^N$ (we shall denote them $f_n$ and $f_{n-1}$ respectively). An edge of $\Gamma_N$ connects an $n$-face $f_n$ with an $(n-1)$-face $f_{n-1}$ iff $f_{n-1}\subset f_n$; it is directed from $f_n$, if $f_{n-1}$ is an outgoing face for $f_n$, and it is directed towards $f_n$ if $f_{n-1}$ is an incoming face; there are no other edges. We shall prove two important properties of this graph: 

\begin{lemma}\label{l:acycl}
There are no oriented cycles in~$\Gamma_N$ (loosely speaking, you cannot return to the starting point, if you follow the arrows of the graph).
\end{lemma}

\begin{proof}
Consider a point, ``traveling'' along the edges of the graph~$\Gamma_N$. One can describe the steps of such ``travel'' as follows: 
\begin{itemize}\itemsep 0pt
 \item when you move from an $n$-face to an $(n-1)$-face you replace one of the stars $*$ in the corresponding sequence $\tau$ by $1$ or $0$, depending on whether the star was on an odd or even place in the set of stars inside $\tau$, when you count from the left; 
 \item when you move from an $(n-1)$-face to an $n$-face you replace one of the digits in $\tau$ by the star $*$; the digit should be by $1$ or $0$, depending on whether the new star takes an even or odd place in the set of stars inside $\tau$, when you count from the left.
\end{itemize}

Assume that we have made several steps of the sort we have just described: $\tau=\tau_1\to\tau_2\to\dots\to\tau_p$. We can regard this process as a chain of changes within the same sequence $\tau$.
Suppose, that in the end all the ``stars'' in the sequence $\tau$ are at their original places, i.e. that the places, occupied by symbols $*$ in $\tau_1$ and in $\tau_p$ are the same (we don't assume anything about the rest of the sequence). In particular, the corresponding faces have the same dimension.

Let $i$ be the leftmost position in the sequence $\tau_1=\tau$, which changed at least once in the process of transformations (i.e. all the symbols on the left from $i$ in all sequences $\tau_1,\dots,\tau_p$ coincide). We can assume that the number of stars on the left of $i$ is \textit{even} (the odd case can be treated similarly). Now, if we replace a star by a digit at the $i$\/-th place, this digit should be equal to $1$; if a digit at $i$ is replaced by star, then the digit should be equal to $0$. Thus it is evident that we cannot return back to $\tau$ in this way. 
\end{proof}

Lemma  \ref{l:acycl} means that  \emph{the graph $\Gamma_N$ induces a partial order on the set of all $n$-dimensional and $(n-1)$-dimensional faces of the cube}. 

\begin{lemma}\label{l:n+1}
Let $N=n+1$. Then in an $(n+1)$-dimensional cube every absolutely incoming $(n-1)$-face precedes  every absolutely outgoing $(n-1)$-face (in the sense of the partial order given by $\Gamma_N$).
\end{lemma}

\begin{proof}
Let an absolutely incoming face $f_{n-1}$ be determined by the following system
\[
\begin{cases}
x_i=\varkappa_i,\\ x_j=\varkappa_j,
\end{cases}
\]
where
\begin{equation}\label{i<j}
i<j,
\end{equation}
and let $\varkappa_i$ and~$\varkappa_j$ be the numbers 0 or~1, chosen in accordance with the conditions of Lemma~\ref{l:abs}; that is $\varkappa_i$ and~$\varkappa_j$ are at the $i$-th and the $j$-th places in the corresponding sequence $\tau_{n-1}$, and all the other symbols in $\tau_{n-1}$ are $*$.

Similarly, consider the system, which determines an absolutely outgoing face~$h_{n-1}$:
\[
\begin{cases}
x_k=\lambda_k,\\ x_l=\lambda_l,
\end{cases}
\]
where
\begin{equation}\label{k<l}
k<l,
\end{equation}
and $\lambda_k$ and~$\lambda_l$ are 0 or~1, depending on their position (see Lemma~\ref{l:abs}).

We are going to find an ``intermediate'' face $g_{n-1}$, such that $f_{n-1}\prec g_{n-1}\prec h_{n-1}$ in the partial order, determined by $\Gamma_N$. Moreover, the face we shall choose is such, that the corresponding path from $f_{n-1}$ to $h_{n-1}$ in $\Gamma_N$ passes through only two $n$-faces and through only one intermediate $(n-1)$-face, $g_{n-1}$.
To do this, observe that since we have $i<j$ and $k<l$, we see that either $i<j$ or $k<j$. Now, if
$i<l$, then $g_{n-1}$ is given by the system
\[
\begin{cases}
x_i=\varkappa_i,\\ x_l=\lambda_l,
\end{cases}
\]
and if $k<j$, it is determined by
\[
\begin{cases}
x_k=\lambda_k,\\ x_j=\varkappa_j.
\end{cases}
\]
The fact that the conditions, mentioned above hold, can be checked by direct inspection.
\end{proof}

It follows from Lemma~\ref{l:acycl} that one can always extend an arbitrary coloring of absolutely incoming $(n-1)$-faces of a cube to a coloring of all its $(n-1)$-faces, using the operator $R$ on $n$-faces (although this continuation can be contradictory in the sense, that the same $(n-1)$-face can be painted in several different colors). Indeed, starting from any $(n-1)$-face of $I^N$ and moving against the arrows of $\Gamma_N$ we can always get to an absolutely incoming face of the big cube, since the number of $(n-1)$-faces is finite, and there are no cycles. More accurately, a face $f^0_{n-1}$ cannot be painted in this way if and only if every $n$-face $g^1_n$ for which $f^0_{n-1}$ is outgoing, contains an incoming $(n-1)$-face $f^1_{n-1}$, which cannot be painted either. We can choose such face $f^1_{n-1}$ and repeat the same reasoning for it. This process can stop only if at some step $k$ we get to a face $f^k_{n-1}$, which is not outgoing for any $n$-face of the cube. Thus this face is absolutely incoming, and hence it has been painted by our assumption. Reverting this process we find a way to color $f^0_{n-1}$. It is clear that this color can depend only on the sequence $f^0_{n-1},f^1_{n-1},\dots,f^k_{n-1}$ and is otherwise uniquely defined.

Let us now prove by induction on the given ordering of $(n-1)$-faces, that the procedure, described above, does not bring up contradictions, i.e. colors of $(n-1)$-faces do not depend on the choice of sequence $f^0_{n-1},f^1_{n-1},\dots,f^k_{n-1}$. The base of induction corresponds to absolutely incoming faces, which are all colored by assumption. To prove the inductive step, consider an $(n-1)$-face $h_{n-1}$ for which all preceding $(n-1)$-faces are colored. The face $h_{n-1}$ is outgoing for several $n$-faces. We shall show, that the $R$-operator on all these faces induce the same color on $h_{n-1}$.

It is enough to do this for any two $n$-faces containing $h_{n-1}$, so let $f_n\supset h_{n-1}$ and $g_n\supset h_{n-1}$. Then $f_n$ and $g_n$ determine a unique $(n+1)$-face $I^{n+1}\subseteq I^N$, such that $I^{n+1}\supset f_n,\ I^{n+1}\supset g_n$. The face $h_{n-1}$ is an absolutely outgoing one with respect to $I^{n+1}$. On the other hand, all absolutely incoming faces of $I^{n+1}$ precede $h_{n-1}$ (see Lemma~\ref{l:n+1}), and hence are consistently colored by inductive assumption (in fact, all the faces, preceding $h_{n-1}$ in $I^{n+1}$ are colored). Hence, by the $n$-simplex equation, the color of $h_n$, induced from $f_n$ and $g_n$ are the same.
\end{proof}

\section{Homologies and cohomologies of $n$-simplex relations}\label{s:hom}

Let $(X,R)$ be a solution of the set-theoretic $n$-simplex equation ($\mbox{STS}_n$ equation for short); here $X$ is a set, equipped with an operation $R:X^{\times n}\to X^{\times n}$, for which holds the equality \eqref{eq:n-simp}. In this section we define a new (co)homology theory, which we call \textit{the $n$\/-simplicial (co)homology}, of which a particular case is \textit{tetrahedral (co)homology}. Our construction is a direct generalization of the Yang-Baxter (co)homology theory, see \cite{CES}.

\subsection{Definition of $n$-simplicial (co)homology}
\label{ss:dfns}

As we have explained (see section \ref{ss:ur} for details), the geometric meaning of the set-theoretic $n$-simplex equation can be expressed in the terms of colorings of $(n-1)$-dimensional faces of a $(n+1)$-cube. Coupled with the construction of Yang-Baxter (co)homology (see \cite{CES}), this brings us to the following definitions.

We fix the solution $(X,R)$ of the set-theoretic $n$-simplex equation. Let $C^{n-1}(I^N)$ denote the set of all $(n-1)$\/-faces of the $N$\/-cube, $N\ge n-1$. As before, we shall say, that a \textit{(permitted) $(n-1)$-coloring of an $N$\/-cube in $X$ colors} is a map $c:C^{n-1}(I^N)\to X$, such that for every $n$\/-face $f_n\subset I^N$, colors of its outgoing $(n-1)$-faces are equal to the values of $R$ on the colors of incoming $n-1$\/-faces; i.e. if $(C^{out}_1(f_n),\dots,C^{out}_n(f_n))$ are incoming and $(C^{in}_1(f_n),\dots,C^{in}_n(f_n))$ are outgoing faces of the $n$-face $f_n$, then
$$
(c(C^{out}_{1}(f_n)),\dots,\,c(C^{out}_{n}(f_n)))=R(c(C^{in}_{1}(f_n)),\dots\,c(C^{in}_{n}(f_n))).
$$
We shall denote the set of all such colorings by $C^n(N,\,X)$.

Now we can define \emph{the complex of $n$\/-simplicial homology} of $(X,R)$ with coefficients in a ring~$\Bbbk$. In what follows we shall usually take $\Bbbk$ to be the field of real or complex numbers, and when no contradiction can occur, drop $\Bbbk$ from our notation.

\begin{dfn}
\label{dfn:n-simcx}
The complex of $n$\/-simplicial homology of $(X,R)$ with coefficients in $\Bbbk$ is the graded vector space
$$
C_*((X,R),n)=\bigoplus_{N\ge n-1}\Bbbk\cdot C^n(N,\,X),
$$
where $\Bbbk\cdot C_N(X,n)$ is a free $\Bbbk$\/-module, spanned by the set of all permitted $(n-1)$\/-colorings of the cube $I^N$. The differential $d_N:C_N((X,R),n)\to C_{N-1}((X,R),n)$ of $C_*((X,R),n)$ is given by the formula
$$
d_n(c)=\sum_{k=1}^n\left(d^f_kc-d^r_kc\right),
$$
where $d^f_kc$ (respectively $d^r_kc$) denotes the restriction of coloring $c$ on the $k$\/-th front (respectively, rear) $N-1$\/-dimensional face of the cube $I^N$, which we identify with $I^{N-1}$ in an evident way. 
\end{dfn}

Recall that in topology one says that a face $(*\;*\;\dots\;0\;*\dots\;*)$ is \textit{a front face}, and a face $(*\;*\;\dots\;1\;*\dots\;*)$ is \textit{a rear face} of the cube $I^N$. The equality $d_{N-1}d_N=0$ is an easy consequence of the fact, that result of restriction of a coloring from a face of $I^N$ to a subface does not depend on the order in which the restrictions are made.

\begin{dfn}
\label{dfn:n-simhom}
The \emph{$n$\/-simplicial homology of $(X,R)$} is the homology of the complex $C_*((X,R),n)$; we shall denote this homology by $H_*((X,R),n;\,\Bbbk)$ (or $H_*(X,n)$, dropping the $R$-operator and coefficients from notation).

Dually, \emph{$n$\/-simplicial cohomology of $(X,R)$} is the homology of dual cochain complex~$C^*((X,R),n)$:
\[
C^*((X,R),n)=\bigoplus_{N\ge n-1}C^N((X,R),n),
\]
where
\[
C^N((X,R),n)=\mathrm{Hom}\,(C_N((X,R),n),\,\Bbbk),
\]
with dual differential. It will be denoted by $H^*((X,R),n;\,\Bbbk)$, or simply $H^*(X,n)$.
\end{dfn}

In the case $n=2$ the $n$\/-simplex equation turns into the (set-theoretic) Yang-Baxter equation. In this case the definition we give reproduce the definition of \textit{Yang-Baxter (co)homology} given in \cite{CES}. If $n=3$ we shall call the corresponding homology and cohomology \textit{tetrahedral}.

The problem to compute $n$\/-simplicial (co)homology in a generic case seems to be quite a non-trivial one (c.f. the formulas for the differential in low dimensions, see section \ref{ss:exam}). The following observation can be of some use for this: the complex $C_*((X,R),n)$, defined here, is a subcomplex inside the chain complex $\tilde C_*(X)$, spanned by all possible (not necessary permitted) $(n-1)$-colorings of $I^N$ into $X$ colors: 
$$
\tilde C_N(X)=\Bbbk\cdot X^{C^{n-1}(I^N)}.
$$
The differential $\tilde d$ is given by the same formula as above. One can say, that $C_*((X,R),n)$ is  cut from $\tilde C_*(X)$ by the $n$-simplex equation.
The homology of $\tilde C_*(X)$ on the other hand is easy to compute: choose a ``zeroth'' color  $x_0\in X$;  the operation, which sends an $(n-1)$-coloring $c$ of $I^N$ to the $(n-1)$-coloring $c'$ of $I^{N+1}$, given by ascribing the color $x_0$ to all ``extra'' $(n-1)$-faces, is a chain homotopy between the identity and the ``smothering'' operation, which sends all the colors to $x_0$. Thus the homology of $\tilde C_*(X)$ are equal to the homology of the complex $\tilde C_*(x_0)$, spanned by the colorings, in which all colors coincide with $x_0$. This homology is evidently equal to $\Bbbk$ in all dimensions.

Let us make one more remark: the complex $\tilde C_*(x_0)$ coincides with the well-known topological object: the unreduced singular cubical complex of the point. In Topology this complex is usually \textit{normalized} (see \cite{HiWi}) so that its homology would coincide with the usual simplicial homology of the point. In the terms of the quasi-isomorphic complex $\tilde C_*(X)$, the normalization consists of passing to the quotient complex $\tilde C^\sharp_*(X)=\tilde C_*(X)/\check C_*(X)$, where $\check C_*(X)$ is the subcomplex, spanned by the colorings of $I^N$ with a pair of identically-colored $(N-1)$\/-faces. In a similar way we can consider \textit{normalized $n$\/-simplicial homology}:

\begin{dfn}
\label{dfn:n-simhomnorm}
\emph{Normalized $n$\/-simplicial homology} is the homology of the normalized chain complex
$$
C^\sharp_*(X,n)=C_*(X,n)/C_*(X,n)\bigcap \check C_*(X);
$$
we shall denote it by $H_*^\sharp(X,n)$. Dually one can define the \emph{normalized $n$\/-simplicial cohomology}, denoted by $H^*_\sharp(X,n)$.
\end{dfn}

\subsection{Low-dimensional examples}
\label{ss:exam}

It follows from Theorem~\ref{thm:vse} that every permitted $(n-1)$-coloring of $I^N$ is uniquely determined by colors of the absolutely incoming $(n-1)$\/-faces. This can be rephrased as follows:

\begin{thm}
\label{prop:base}
The space $C_N(X,n)$ has a base, indexed by the colorings of the set of absolutely incoming $(n-1)$-faces into $X$ colors, i.e. by the set $X^{\times{\binom{N}{n-1}}}$. In particular, if $X$ is finite, then
$$
\mathrm{dim}\,C_N(X,n)=|X|^{\binom{N}{n-1}}.
$$
\qed
\end{thm}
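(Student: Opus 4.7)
The plan is to derive this immediately from Theorem~\ref{thm:vse} together with Lemma~\ref{l:abs}. Recall that, by Definition~\ref{dfn:n-simcx}, the space $C_N(X,n)$ is by construction the free $\Bbbk$-module on the set of \emph{permitted} $(n-1)$-colorings of the cube~$I^N$. Hence to produce a basis it suffices to exhibit a bijection between this set of permitted colorings and some explicit indexing set.

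First, I would recall from Lemma~\ref{l:abs} that the number of absolutely incoming $(n-1)$-faces of $I^N$ is exactly $\binom{N}{n-1}$. Consequently, the set of all (a priori unrestricted) $X$-valued functions on the collection of absolutely incoming $(n-1)$-faces is canonically in bijection with the Cartesian power $X^{\times \binom{N}{n-1}}$, by choosing, once and for all, an ordering of the absolutely incoming faces.

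Next, I would invoke Theorem~\ref{thm:vse}, which provides a map in one direction: every permitted coloring of $I^N$ restricts to a coloring of the absolutely incoming $(n-1)$-faces, and conversely any $X$-coloring of the absolutely incoming faces extends \emph{uniquely} to a permitted coloring of all $(n-1)$-faces. These two operations are manifestly inverse to each other, so we get a bijection between the set of permitted colorings and $X^{\times \binom{N}{n-1}}$. Transporting this bijection through the free $\Bbbk$-module construction yields the desired basis of $C_N(X,n)$, indexed by $X^{\times \binom{N}{n-1}}$. The dimension formula $\dim C_N(X,n) = |X|^{\binom{N}{n-1}}$ for finite~$X$ then follows by simply counting the indexing set.

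There is essentially no obstacle here: the content of the statement is a direct reformulation of Theorem~\ref{thm:vse} as a statement about the free module generated by the set of permitted colorings. The only thing worth being careful about is the implicit choice of ordering on the set of absolutely incoming faces needed to identify their coloring set with the specific Cartesian product $X^{\times \binom{N}{n-1}}$; but this choice does not affect the conclusion and can be fixed arbitrarily (for instance, using the lexicographic order on the sequences $\tau_{n-1}$ classifying absolutely incoming faces in Lemma~\ref{l:abs}).
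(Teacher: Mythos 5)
Your argument is correct and is exactly the paper's: the authors likewise deduce the theorem immediately from Theorem~\ref{thm:vse} (unique extension of a coloring of the absolutely incoming $(n-1)$-faces to a permitted coloring) together with the count of $\binom{N}{n-1}$ absolutely incoming faces from Lemma~\ref{l:abs}, which is why the statement is followed only by a \emph{qed} mark. Your extra remark about fixing an ordering of the absolutely incoming faces is a harmless refinement of the same argument.
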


In a generic situation, the differential of $C_*(X,n)$ is expressed by a rather complicated formula in the terms of the base, mentioned in Theorem~\ref{prop:base}. We shall confine ourselves to giving the formulas in low dimensions and for $n=3$, i.e. for tetrahedral homology and cohomology. Formulas for other $n$ are similar.

\bigskip
\noindent\textbf{$\boldsymbol{N=2,3}$. } In dimensions $2$ and $3$ the complex $C_*(X,2)$ is very simple: 
$$
\begin{aligned}
C_2(X,2)&=\Bbbk\cdot X,\\
C_3(X,2)&=\Bbbk\cdot X^{\times3}.
\end{aligned}
$$
Let $a,\ b$ and $c$ be the colors of absolutely incoming $2$-faces of a $3$-cube; we shall denote the corresponding $2$-coloring of a three-dimensional cube by $(a,b,c)$. Similarly $(a)$ will mean the coloring of a square by the color $a$. In this notation the differential is
$$
d_3((a,b,c))=(a)+(b)+(c)-(R_1(a,b,c))-(R_2(a,b,c))-(R_3(a,b,c)).
$$

\medskip
\noindent\textbf{$\boldsymbol{N=3,4}$.} By Theorem~\ref{prop:base},
$$
C_4(X,2)=\Bbbk\cdot X^6,\ \mbox{since}\ {\binom{4}{2}}=6.
$$
As before we shall denote by $(a_1,a_2,a_3,a_4,a_5,a_6)$ the $2$\/-coloring of $I^4$, corresponding to the colors $a_1,a_2,a_3,a_4,a_5$ and $a_6$ on absolutely incoming faces. Then the differential $d_4:C_4(X,2)\to C_3(X,2)$ is given by (c.f. figure \ref{f:t}):
$$
\begin{aligned}
d_4&(a_1,a_2,a_3,a_4,a_5,a_6)=(a_1,a_2,a_3)+(a_3,a_5,a_6)\\
&-(R_1(a_1,R_2(a_2,a_4,R_3(a_3,a_5,a_6)),R_2(a_3,a_5,a_6)),\\
&\qquad\qquad\qquad\qquad\qquad\qquad\qquad\qquad R_1(a_2,a_4,R_3(a_3,a_5,a_6)),R_1(a_3,a_5,a_6))\\
&-(R_3(a_1,a_2,a_3),R_3(R_1(a_1,a_2,a_3),a_4,a_5),\\
&\qquad\qquad\qquad\qquad\qquad\qquad\quad R_3(R_2(a_1,a_2,a_3),R_2(R_1(a_1,a_2,a_3),a_4,a_5),a_6))\\
                                                         &+(a_1,R_2(a_2,a_4,R_3(a_3,a_5,a_6)),R_2(a_3,a_5,a_6))-(a_2,a_4,R_3(a_3,a_5,a_6))\\
                                                         &+(R_2(a_1,a_2,a_3),R_2(R_1(a_1,a_2,a_3),a_4,a_5),a_6)-(R_1(a_1,a_2,a_3),a_4,a_5).
\end{aligned}
$$
Note that dualization of this formula yields the following description of tetrahedral $3$-cocycles: they are the maps $f:X^{\times 3}\to\Bbbk$, vanishing on the image of $d_4$:
$$
\begin{aligned}
f&(a_1,a_2,a_3)-f(R_1(a_1,a_2,a_3),a_4,a_5)\\
 &+f(R_2(a_1,a_2,a_3),R_2(R_1(a_1,a_2,a_3),a_4,a_5),a_6)\\
 &-f(R_3(a_1,a_2,a_3),R_3(R_1(a_1,a_2,a_3),a_4,a_5),\\
 &\qquad\qquad\qquad\qquad\qquad\qquad\qquad R_3(R_2(a_1,a_2,a_3),R_2(R_1(a_1,a_2,a_3),a_4,a_5),a_6))\\
 &=-f(a_3,a_5,a_6)+f(a_2,a_4,R_3(a_3,a_5,a_6))\\
 &\ -f(a_1,R_2(a_2,a_4,R_3(a_3,a_5,a_6)),R_2(a_3,a_5,a_6))\\
 &\ +f(R_1(a_1,R_2(a_2,a_4,R_3(a_3,a_5,a_6)),R_2(a_3,a_5,a_6)),\\
 &\qquad\qquad\qquad\qquad\qquad\qquad\qquad\qquad R_1(a_2,a_4,R_3(a_3,a_5,a_6)),R_1(a_3,a_5,a_6)).
\end{aligned}
$$
In the notation of formula \eqref{R}, this equality shrinks to
$$
\begin{aligned}
f&(a_1,a_2,a_3)-f(a_1',a_4,a_5)+f(a_2',a_4',a_6)-f(a_3',a_5',a_6')=\\
 &=-f(a_3,a_5,a_6)+f(a_2,a_4,a_6')-f(a_1,a_4',a_5')+f(a_1',a_2',a_3').
\end{aligned}
$$
Below we shall make use of the following multiplicative form of this equation: put  
$$
\varphi(a,b,c)=e^{(-1)^\varepsilon f(a,b,c)},
$$
where $\varepsilon=(-1)^{\#'}$ ($\#'$ is the number of ``primed'' arguments on the left and $1$ plus this number on the right), then $\varphi$ is cocycle iff
\begin{equation}\label{cocycle}
\begin{aligned}
\varphi&(a_1,a_2,a_3)\varphi(a_1',a_4,a_5)\varphi(a_2',a_4',a_6)\varphi(a_3',a_5',a_6')\\
 &=\varphi(a_3,a_5,a_6)\varphi(a_2,a_4,a_6')\varphi(a_1,a_4',a_5')\varphi(a_1',a_2',a_3').
\end{aligned}
\end{equation}
In the same notation, $\varphi$ is cohomologous to $0$, iff 
\begin{equation}\label{coboundary}
\varphi(a,b,c)=\frac{\psi(a)\psi(b)\psi(c)}{\psi(a')\psi(b')\psi(c')}
\end{equation}
for some $\psi:X\to\Bbbk$.

The introduced notion of cocycle looks productive: there do exist nontrivial cocycles. This will be shown in Subsection~\ref{ss:xmpl}, see Lemma~\ref{lem:nontrivial}. Right now, we observe that both the 3-cocycle relation~\eqref{cocycle} and 3-coboundary relation~\eqref{coboundary} make sense also for the FTE with $R$ being a rational mapping such as~\eqref{el}. So we can generalize (at least) the notions of 3-cocycle and 3-coboundary onto the case of FTE, and the following lemma shows that, indeed, there exist interesting cocycles in the electric case.

\begin{lemma}\label{lem_cocycle}
Consider the electric solution~\eqref{el} to tetrahedron equation $\Phi\colon\; (a_1,a_2,a_3) \mapsto (a_1',a_2',a_3')$. For this case, the following expressions, as well as the product of them raised in any integer degrees, are 3-cocycles of the tetrahedral complex:
\begin{equation}\label{el-cocycles}
 \begin{array}{rcl}
  c_1(a_1,a_2,a_3, a_1',a_2',a_3')&=&a_2, \\[.8ex]
  c_2(a_1,a_2,a_3, a_1',a_2',a_3')&=&a_2'.
 \end{array}
\end{equation}
\end{lemma}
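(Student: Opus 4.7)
The plan is to verify the multiplicative cocycle relation \eqref{cocycle} directly for $\varphi = c_1$ and $\varphi = c_2$; arbitrary integer-power products will then follow since the set of multiplicative cocycles is closed under multiplication and inversion. The whole argument will rest on two identities satisfied by the electric mapping \eqref{el}, both immediate upon cancelling the common denominator $x+z+xyz$:
\[
R_1(x,y,z)\,R_2(x,y,z) = xy, \qquad R_2(x,y,z)\,R_3(x,y,z) = yz.
\]

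For $c_1(a,b,c) = b$ I will substitute into \eqref{cocycle} and read off the second argument in each of the eight factors. The left-hand side will become $a_2 \cdot a_4 \cdot a_4' \cdot a_5'$, in which $a_4',a_5'$ denote the outputs of the $R_{145}$-invocation at positions $4$ and $5$; the product $a_4' a_5'$ will collapse to $a_4 a_5$ by the second identity above, giving $a_2 a_4^2 a_5$. The right-hand side will become $a_5 \cdot a_4 \cdot a_4' \cdot a_2'$, where now (in the alternate order of operations dictated by the other half of the tetrahedron equation) $a_4',a_2'$ are outputs of the $R_{246}$-invocation at positions $4$ and $2$; by the first identity $a_4' a_2' = a_2 a_4$, and this side too equals $a_2 a_4^2 a_5$. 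For $c_2(a,b,c) = R_2(a,b,c)$, each of the eight factors in \eqref{cocycle} is itself a value of $R_2$ at some triple; after I use the tetrahedron equation to match the final intermediate values computed on the two sides (e.g.\ the value at position $4$ after $R_{145}$ on one tree is the value at position $4$ after $R_{246}$ on the other), the same two identities will again reduce each side to the same monomial.

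For products and integer powers, the key observation is that \eqref{cocycle} is multiplicative in $\varphi$: if $\varphi$ and $\psi$ both satisfy it, then so do $\varphi \psi$ (factor by factor) and $\varphi^{-1}$ (invert both sides). Hence $c_1^m c_2^n$ will be a cocycle for every pair of integers $(m,n)$. The main obstacle, and it is really only a notational one, will be keeping track of what the primed letters $a_i'$ stand for in each of the eight factors of \eqref{cocycle}: on one side they arise as intermediate values appearing in the order $R_{123}, R_{145}, R_{246}, R_{356}$, on the other in the reverse order, and it is precisely the tetrahedron equation that makes the two systems of intermediates comparable. Once that bookkeeping is pinned down, the algebra collapses to a one-line use of the two displayed identities for each of $c_1$ and $c_2$.
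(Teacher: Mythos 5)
Your proposal is correct and matches the paper's approach: the paper's entire proof is the phrase ``Direct calculation,'' and you simply carry that calculation out, with the identities $R_1R_2=xy$ and $R_2R_3=yz$ reducing the left and right sides of~\eqref{cocycle} to $a_2a_4^2a_5$ for $c_1$ (and, for $c_2$, to the same monomial in the final colors once the tetrahedron equation identifies them). The only point to state carefully in a write-up is that the primed symbols in~\eqref{cocycle} denote the intermediate values along the respective order of $R$-applications on each side, so that on each side the two primed factors you pair up really are outputs of a single $R$-invocation.
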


\begin{proof}
Direct calculation.
\end{proof}

\section{Solutions to quantum tetrahedron equations obtained from nontrivial cocycles}\label{s:examples}

The $n$-simplicial homologies and cohomologies introduced above are expected to find many applications in topology, theory of dynamical systems, and other parts of mathematics. Below we present an example of how the notion of cocycle can be applied to obtaining new solutions to the QTE (quantum tetrahedron equation).

\subsection{Cocycles and solutions to QTE}\label{ss:QTE-cocycles}

If R is a solution to STTE on a finite set~$X$, then there is a canonical construction of a solution to QTE on $V=\Bbbk[X]$ --- the space of functions on~$X$ taking values in a field~$\Bbbk$. In this space, we introduce the basis $\{e_x\}_{x\in X}$. Consider the linear mapping in $V\times V\times V$ defined as follows:
\begin{equation}\label{PhiV}
\Phi(e_x\otimes e_y\otimes e_z):=e_{x'}\otimes e_{y'}\otimes e_{z'}\qquad\text{where}\qquad (x',y',z')=R(x,y,z).
\end{equation}

We have then the following simple lemma:

\begin{lemma}\label{l:V}
If $R$ is a solution to STTE, then $\Phi$ is a solution to QTE.
\end{lemma}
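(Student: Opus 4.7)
The plan is to verify the QTE by evaluating both sides on an arbitrary basis vector of $V^{\otimes 6}$ and using the set-theoretic identity~\eqref{eqtetr1} componentwise. The point is that $\Phi$ is a \emph{permutation-type} operator: it sends basis vectors to basis vectors, so each $\Phi_{ijk}$ on $V^{\otimes 6}$ acts on the basis element $e_{a_1}\otimes\cdots\otimes e_{a_6}$ by replacing the triple of indices $(a_i,a_j,a_k)$ with $R(a_i,a_j,a_k)$ and leaving the other three indices untouched. In other words, if we write $R_{ijk}\colon X^{\times 6}\to X^{\times 6}$ for the set-theoretic lift as in~\eqref{eqtetr}, then
\[
\Phi_{ijk}(e_{a_1}\otimes\cdots\otimes e_{a_6})=e_{b_1}\otimes\cdots\otimes e_{b_6},\quad\text{where}\quad (b_1,\dots,b_6)=R_{ijk}(a_1,\dots,a_6).
\]

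First I would establish this compatibility formula rigorously: it is immediate from definition~\eqref{PhiV} because the tensor factors outside positions $i,j,k$ are acted upon by the identity and the remaining three factors transform by~$\Phi$, which is by construction the linearization of~$R$. Next I would iterate this observation: for any composition $\Phi_{i_1j_1k_1}\circ\Phi_{i_2j_2k_2}\circ\cdots$ acting on a basis vector $e_{a_1}\otimes\cdots\otimes e_{a_6}$, the output is again a basis vector whose indices are obtained by applying the parallel set-theoretic composition $R_{i_1j_1k_1}\circ R_{i_2j_2k_2}\circ\cdots$ to $(a_1,\dots,a_6)$.

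Applying this to both sides of~\eqref{tetra}, one obtains
\[
\bigl(\Phi_{123}\Phi_{145}\Phi_{246}\Phi_{356}\bigr)(e_{a_1}\otimes\cdots\otimes e_{a_6}) = e_{b_1}\otimes\cdots\otimes e_{b_6},
\]
where $(b_1,\dots,b_6)=(R_{123}\circ R_{145}\circ R_{246}\circ R_{356})(a_1,\dots,a_6)$, and similarly for the r.h.s.\ with the reversed composition. Since $R$ satisfies STTE~\eqref{eqtetr1}, the two $6$-tuples coincide, so the two basis vectors coincide. As the equality holds on every element of a basis of $V^{\otimes 6}$, it holds on $V^{\otimes 6}$ by linearity, which is~\eqref{tetra}.

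There is no real obstacle here; the argument is entirely formal once one notes that the assignment $R\mapsto\Phi_R$ commutes with the formation of the operators $(\,\cdot\,)_{ijk}$ and with composition on permutation-type operators. The only mild bookkeeping is to check that $\Phi_{ijk}$ on $V^{\otimes 6}$ really does correspond to $R_{ijk}$ on $X^{\times 6}$ under the basis identification, which amounts to the tautology that the identity operator on $V$ is the linearization of the identity on~$X$.
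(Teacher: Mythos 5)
Your argument is correct and is essentially the paper's own proof, just spelled out in more detail: both verify the quantum equation on basis vectors by observing that $\Phi_{ijk}$ acts on indices exactly as $R_{ijk}$ does, invoke the set-theoretic identity, and conclude by linearity. No issues.
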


\begin{proof}
First, as $\Phi$ acts in the very same manner as $R$ on \emph{basis} vectors $e_x\otimes e_y\otimes e_z \in V\times V\times V$, both sides of QTE also act in the same way on all basis vectors in~$V^{\times 6}$. Second, this equalness is extended onto the whole~$V^{\times 6}$ by linearity.
\end{proof}

Solutions like this~$\Phi$ are called \emph{permutation-type} solutions, see~\cite{Hie}.

\begin{dfn}
Two solutions $\Phi$ and~$\Phi'$ to the quantum constant tetrahedron equation are called \emph{equivalent} if they can be obtained from one another by a conjugation with the tensor product of three copies of some (invertible) operator~$A$, each copy acting in its own copy of~$V$:
\begin{equation}\label{ABC}
\Phi' = (A\otimes A\otimes A)\,\Phi\,(A\otimes A\otimes A)^{-1}.
\end{equation}
\end{dfn}

We are going to generalize the permutation-type solutions as follows:

\begin{thm}
{\rm(a)}~If $R$ is a solution to STTE, and $\varphi$ is a multiplicative cocycle (i.e., \eqref{cocycle} holds), then the operator determined by
\begin{equation}\label{PhiVeta}
\Phi_{\varphi}(e_x\otimes e_y\otimes e_z) = \varphi(x,y,z) e_{x'}\otimes e_{y'}\otimes e_{z'}
\end{equation}
also is a solution to QTE. 

{\rm(b)}~If $\varphi$ and~$\varphi'$ are two cohomologous cocycles, then the corresponding operators $\Phi_{\varphi}$ and~$\Phi_{\varphi'}$ are gauge equivalent and, moreover, (matrix of) operator~$A$ in~\eqref{ABC} is diagonal.
\end{thm}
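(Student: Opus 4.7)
For part (a), the plan is to verify QTE by evaluating both sides on an arbitrary basis vector $e_{a_1}\otimes\cdots\otimes e_{a_6}\in V^{\otimes 6}$. Since $\Phi_\varphi$ acts on a basis vector exactly as the permutation-type $\Phi$ from Lemma~\ref{l:V} does, up to multiplication by a scalar $\varphi(\cdot,\cdot,\cdot)$, both sides of QTE produce the same permuted basis vector, namely the common image coming from either side of STTE (which holds by hypothesis). What remains is to verify equality of the scalar factors accumulated on the two sides.

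The key observation is that the triples $(a_1,a_2,a_3),\,(a_1',a_4,a_5),\,(a_2',a_4',a_6),\,(a_3',a_5',a_6')$ appearing as arguments of $\varphi$ on the LHS of~\eqref{cocycle}, and $(a_3,a_5,a_6),\,(a_2,a_4,a_6'),\,(a_1,a_4',a_5'),\,(a_1',a_2',a_3')$ on its RHS, are precisely the colors of positions $(i,j,k)$ at the moment each $\Phi_{ijk}$ is applied along the two compositions of QTE. This match is essentially what was used in Subsection~\ref{ss:exam} to derive~\eqref{cocycle} from $d_4=0$ via the cube-coloring picture of Subsection~\ref{ss:tf}; the primed letters there encode the progressively updated colors of the faces of the tesseract. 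Tracking the scalars therefore identifies the QTE, applied to any basis vector, with the cocycle equation~\eqref{cocycle}. The main --- though essentially routine --- obstacle is this bookkeeping of intermediate colors to ensure that every factor of $\varphi$ picked up in the composition matches the corresponding factor in~\eqref{cocycle}.

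For part (b), two cohomologous cocycles $\varphi$ and~$\varphi'$ differ by a coboundary, so by~\eqref{coboundary} there exists a (nowhere vanishing) $\psi\colon X\to\Bbbk$ with
$$\frac{\varphi'(x,y,z)}{\varphi(x,y,z)}=\frac{\psi(x)\psi(y)\psi(z)}{\psi(x')\psi(y')\psi(z')}.$$
Define the diagonal operator $A\in\End(V)$ in the basis $\{e_x\}_{x\in X}$ by $A e_x = \psi(x)^{-1} e_x$. A direct calculation on a basis vector then yields
$$(A\otimes A\otimes A)\,\Phi_\varphi\,(A\otimes A\otimes A)^{-1}(e_x\otimes e_y\otimes e_z)=\varphi(x,y,z)\,\frac{\psi(x)\psi(y)\psi(z)}{\psi(x')\psi(y')\psi(z')}\,e_{x'}\otimes e_{y'}\otimes e_{z'},$$
which equals $\Phi_{\varphi'}(e_x\otimes e_y\otimes e_z)$ by the coboundary relation above. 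Since $A$ is manifestly diagonal in the basis $\{e_x\}$, both claims of~(b) follow with no further work.
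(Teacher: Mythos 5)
Your argument is correct and follows essentially the same route as the paper: part~(a) is the observation that both sides of QTE permute basis vectors identically (by STTE) and that the accumulated scalar factors are exactly the two sides of~\eqref{cocycle}, and part~(b) is the same direct conjugation computation, with your $A\colon e_x\mapsto\psi(x)^{-1}e_x$ differing from the paper's $A\colon e_x\mapsto\psi(x)e_x$ only because you wrote the coboundary ratio as $\varphi'/\varphi$ rather than $\varphi/\varphi'$. No gaps.
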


\begin{proof}
{\rm(a)}~Again, like in Lemma~\ref{l:V}, both sides of QTE act in the same way on all basis vectors in~$V^{\times 6}$. This time, the resulting basis vectors acquire (in contrast with situation in Lemma~\ref{l:V}) some nontrivial multipliers, but these coincide in both sides, due to the equalities~\eqref{cocycle}.

{\rm(b)}~If $\varphi$ and~$\varphi'$ are cohomologous, then, according to~\eqref{coboundary},
\begin{equation*}
\frac{\varphi(x,y,z)}{\varphi'(x,y,z)}=\frac{\psi(x)\psi(y)\psi(z)}{\psi(x')\psi(y')\psi(z')}
\end{equation*}
Replacing $\varphi$ with $\varphi'$ in~\eqref{PhiVeta}, we get thus:
\begin{equation}\label{gequiv}
\Phi_{\varphi'}\bigl(\psi(x)e_x\otimes \psi(y)e_y\otimes \psi(z)e_z\bigr) = \varphi(x,y,z) \psi(x')e_{x'}\otimes \psi(y')e_{y'}\otimes \psi(z')e_{z'}.
\end{equation}
And comparing now \eqref{gequiv} with~\eqref{PhiVeta}, we see that \eqref{ABC} indeed holds, with diagonal $A\colon\; e_x\mapsto \psi(x)e_x$.
\end{proof}

\subsection{Avoiding the singularities in electric solution}\label{ss:sol}

The electric solution~\eqref{el} to FTE
\begin{equation*}
\begin{aligned}
R(x,y,z)&=(x',y',z');\\
x'&=\frac {x y} {x+z+x y z},\\
y'&=x+z+x y z,\\
z'&=\frac{ yz} {x+z+x y z}
\end{aligned}
\end{equation*}
is not, generally speaking, a solution to STTE, because of possible singularities. Remarkably, there is a possibility to avoid these singularities by taking a special color set~$X$. Consider the residue ring~$\Z/p^k\Z$, where $p$ is either a prime number of the form $p=4l+1$, or $p=2$, and $k$ is an integer $\ge 2$. For such~$p$, the Legendre symbol $\left(\frac{-1}{p}\right)=1$, that is, there exists a square root of $-1$ in~$\Z/p\Z$. We fix one such square root and call it $\varepsilon\in \Z/p\Z$. Our~$X$ will be the following subset of~$\Z/p^k\Z$:
\begin{equation}\label{X}
X=\{x \in \Z/p^k\Z\,\colon\;\; x=\varepsilon \mod p\}.
\end{equation}

\begin{lemma}
The electric mapping~\eqref{el} restricts correctly to a bijection of set $X\times X\times X$ onto itself.
\end{lemma}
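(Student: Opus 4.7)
The plan is to verify three things in turn: that the denominator $x+z+xyz$ occurring in~\eqref{el} is a unit in $\mathbb Z/p^k\mathbb Z$ whenever $(x,y,z)\in X^{\times 3}$, so the formulas define an honest self-map; that the images $x',y',z'$ again lie in $X$; and that the resulting map of finite sets is bijective. For the last point I intend to exploit the well-known fact that the electric/star--triangle transformation is an involution.

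The first two steps reduce to a computation modulo~$p$. Since every element of $X$ is $\equiv\varepsilon\pmod p$ and $\varepsilon^2\equiv-1\pmod p$, I would compute
\[
x+z+xyz\;\equiv\; 2\varepsilon+\varepsilon\cdot\varepsilon^2\;\equiv\; 2\varepsilon-\varepsilon\;\equiv\;\varepsilon\pmod p,
\]
which is coprime to~$p$ and hence a unit in $\mathbb Z/p^k\mathbb Z$. The same reduction gives $y'\equiv\varepsilon\pmod p$, and using $\varepsilon^{-1}\equiv-\varepsilon\pmod p$ one finds $x'\equiv\varepsilon^2/\varepsilon\equiv\varepsilon\pmod p$ and $z'\equiv\varepsilon\pmod p$. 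For $p=2$ one takes $\varepsilon=1$, and the computation collapses to the observation that $x+z+xyz$ is odd whenever $x,y,z$ are odd, with all other mod-$2$ checks trivial. Thus $R$ sends $X^{\times 3}$ to $X^{\times 3}$.

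For bijectivity I would verify that $R$ is an involution by a direct calculation: writing $D=x+z+xyz$ and applying~\eqref{el} to $(x',y',z')=R(x,y,z)$, one gets $x'y'=xy$, $x'+z'=y(x+z)/D$, and $x'y'z'=xy^2z/D$, whence the new denominator is
\[
x'+z'+x'y'z'\;=\;\frac{y(x+z+xyz)}{D}\;=\;y.
\]
Substituting back yields $R(x',y',z')=(x,y,z)$. Since $R|_{X^{\times 3}}$ is its own inverse on the finite set $X^{\times 3}$, it is a bijection. The only mildly delicate step is the $p=2$ case, where the square root of $-1$ degenerates to~$1$; but the same formulas apply, so there is no real obstacle.
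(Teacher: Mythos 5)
Your proposal is correct, and for the first two steps it coincides with the paper's own argument: the paper likewise reduces everything modulo $p$, computes $y'\equiv\varepsilon+\varepsilon+\varepsilon^3\equiv\varepsilon\pmod p$ using $\varepsilon^2\equiv-1$, notes that $y'$ is exactly the denominator of $x'$ and $z'$ and is therefore a unit, and concludes $x'\equiv z'\equiv\varepsilon\pmod p$. Where you genuinely diverge is on bijectivity: the paper's proof stops after showing that $R$ is a well-defined self-map of $X^{\times 3}$ and never addresses why the restriction is a bijection, presumably treating the invertibility of the star--triangle map as folklore. Your explicit verification that $R$ is an involution --- via $x'y'=xy$, $x'+z'+x'y'z'=y(x+z+xyz)/D=y$, and back-substitution --- closes that gap cleanly, and it is the right way to do it: invertibility of $R$ as a rational map would not by itself guarantee that the inverse preserves $X^{\times 3}$, whereas $R^2=\mathrm{id}$ on $X^{\times 3}$ does. (One could alternatively argue injectivity plus finiteness of $X$, but the involution computation is self-contained and also legitimizes the second application of $R$, since you have already shown the first image lies in $X^{\times 3}$ where the denominator is a unit.) Your handling of $p=2$ with $\varepsilon=1$ is also fine. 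In short: same computation as the paper for well-definedness, plus a strictly more complete treatment of the bijectivity claim.
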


\begin{proof}
First, we note that
\begin{equation*}
y' \mod p = \varepsilon+\varepsilon + \varepsilon^3=\varepsilon,
\end{equation*}
as required. As $y'$ also coincides with the denominator in the expressions for both $x'$ and~$z'$, and clearly
\begin{equation}\label{vareps}
\varepsilon \ne 0 \mod p^k,
\end{equation}
the division in those expressions goes without trouble. We then see that $x' \mod p = z' \mod p = \varepsilon$ as well.
\end{proof}

Recall that we agreed to take field $\mathbb R$ or~$\mathbb C$ as our ring~$\Bbbk$ of coefficients (Subsection~\ref{ss:dfns}, paragraph before Definition~\ref{dfn:n-simcx}). What remains is to make from our expressions~\eqref{el-cocycles} multiplicative $\Bbbk$-valued cocycles. To this end, we can observe that all elements in~$X$ are \emph{invertible} (according to \eqref{X} and~\eqref{vareps}), and apply to expressions~\eqref{el-cocycles} any homomorphism
\begin{equation}\label{eta}
\eta\colon\;\; (\Z/p^k Z)^* \to \Bbbk^*
\end{equation}
of groups of invertible elements in our rings.

\subsection{Examples}\label{ss:xmpl}

\begin{Ex}
Consider the case~$\Z/25\Z$, and fix $\varepsilon=2$. Introduce a coordinate~$n$ on the set~$X$ so that $x=2+5n$, where $x\in X$ and $n$ takes values $n=0,\ldots,4$, identified with elements in~$\Z/5\Z$. To be more exact, below in~\eqref{xyz}
\begin{equation}\label{xn}
x=2+5n_1,\quad y=2+5n_2,\quad z=2+5n_3.
\end{equation}
In these notations, the electric transformation takes the form
\begin{equation}\label{xyz}
R(n_1,n_2,n_3)=(n_1+2n_2-2,\,2-n_2,\,n_3+2n_2-2).
\end{equation}
The multiplicative group~$(\Z/25\Z)^*$ is a cyclic group of order~20. So, if we choose $\Bbbk=\mathbb C$, then there are 20 possible homomorphisms~$\eta$~\eqref{eta}. The expressions~\eqref{el-cocycles} yield the following cocycles:
\begin{equation}\label{Z25}
\eta(y)=\eta(2+5n_2)\quad\text{and}\quad \eta(y')=\eta\bigl(2+5(2-n_2)\bigr).
\end{equation}
\end{Ex}

\begin{lemma}\label{lem:nontrivial}
There are nontrivial --- not being a coboundary --- cocycles among those in~\eqref{Z25}.
\end{lemma}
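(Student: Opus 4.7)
The plan is to produce a character $\eta\colon (\Z/25\Z)^* \to \mathbb C^*$ for which the cocycle $\varphi(x,y,z) = \eta(y)$ from~\eqref{Z25} is not a coboundary in the sense of~\eqref{coboundary}. The mechanism is a fixed-point shortcut that collapses the coboundary relation to a single arithmetic constraint on $\eta$, after which cyclic-group counting produces the required nontrivial character.

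First I would read off the fixed locus of the reduced electric map~\eqref{xyz}. Setting $n_2 = 1$ gives
\[
R(n_1, 1, n_3) = (n_1 + 2 - 2,\; 2 - 1,\; n_3 + 2 - 2) = (n_1, 1, n_3),
\]
so every triple $(x, 7, z)$ with $x, z \in X$ is a genuine fixed point of $R$ (in particular $y' = 7 = y$). If $\eta(y)$ were a coboundary, then~\eqref{coboundary} would supply some $\psi\colon X \to \mathbb C^*$ with
\[
\eta(y) \;=\; \frac{\psi(x)\psi(y)\psi(z)}{\psi(x')\psi(y')\psi(z')},
\]
and evaluating this identity at any fixed point $(x, 7, z)$ collapses the right-hand side to $1$. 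Hence $\eta(7) = 1$ is a necessary condition.

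Finally I would exhibit $\eta$ violating this condition. The group $(\Z/25\Z)^*$ is cyclic of order $20$, and $2$ is a generator: $2^{10} \equiv -1 \pmod{25}$, so $\ord(2) = 20$. Because $7 \equiv 2^5 \pmod{25}$, the necessary condition $\eta(7) = 1$ is equivalent to $\eta(2)^5 = 1$, and only $5$ of the $20$ possible values $\eta(2) \in \mu_{20} \subset \mathbb C^*$ satisfy this. For any of the remaining $15$ characters --- e.g.\ $\eta$ with $\eta(2)$ a primitive $20$th root of unity --- the cocycle $\eta(y)$ is not cohomologous to zero. The same argument handles $\eta(y')$, since at $n_2 = 1$ one still has $y' = 7$. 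There is no substantive obstacle; the one insight required is that reducing modulo $p = 5$ in the coordinates~\eqref{xn} produces a $2$-parameter family of fixed points of $R$, after which the problem reduces to arithmetic in a small cyclic group.
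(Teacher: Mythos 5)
Your proof is correct and follows essentially the same route as the paper's: both exploit that $n_2=1$ makes $(n_1,1,n_3)$ a fixed point of the reduced electric map, so a coboundary would force $\eta(7)=1$, and then pick a character violating this. Your only addition is to make explicit (via $7\equiv 2^5 \pmod{25}$ and $2$ generating the cyclic group of order $20$) the paper's remark that such $\eta$ ``obviously'' exist.
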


\begin{proof}
Take $n_2=1$ and arbitrary $n_1$ and~$n_3$ in~\eqref{xyz}. Then \eqref{xyz} gives $(n'_1,n'_2,n'_3) = (n_1,n_2,n_3)$, hence also $(x',y',z')=(x,y,z)$, and formula~\eqref{coboundary} gives value~$1$ if our cocycle is a coboundary. But, according to the second formula in~\eqref{xn}, $y=y'=7$, and there obviously exist homomorphisms~$\eta$ such that $\eta(7)\ne 1$.
\end{proof}

\begin{Ex}
A different example of the construction in Subsection~\ref{ss:sol} arises if we take the set of odd elements in the ring $\Z/2^k\Z$ as our finite set~$X$, and restrict the electric solution~\eqref{el} onto this~$X$. Consider in more detail the case $k=3$. In this case, $X$ can be identified with $\Z/4\Z$ using the mapping $\phi\colon\;\Z/4\Z\rightarrow\Z/8\Z$ given by
\begin{equation*}
\phi\colon\;n\mapsto 2n+1.
\end{equation*}
In this parameterization, the solution $R\colon\;(n_1,n_2,n_3)\mapsto(n_1',n_2',n_3')$ to STTE can be written as follows:
\begin{eqnarray*}
n_1'&=&n_1+1+2(n_2+n_3+n_1 n_2 + n_1 n_3+n_2 n_3); \\
n_2'&=&n_2+1+2(n_1+n_3+ n_1 n_2 + n_1 n_3+n_2 n_3); \\
n_3'&=&n_3+1+2(n_1+n_2+n_1 n_2 + n_1 n_3+n_2 n_3).
\end{eqnarray*}
The cocycles are, accordingly,
\begin{equation}\label{eta8}
\eta(2n_2+1)\quad\text{and}\quad\eta(2n'_2+1),
\end{equation}
where, as the group $(\Z/8\Z)^*$ is isomorphic to $\Z_2\times\Z_2$, there are this time four possible~$\eta$ for either $\Bbbk=\mathbb R$ or~$\mathbb C$. 
We leave the question of (possible) nontriviality of cocycles~\eqref{eta8}, as well as their analogues for arbitrary ring~$\Z/2^k\Z$, for further research.
\end{Ex}

\subsection*{Acknowledgements}

The work of G.~S.\ was partially supported by the RFBR grant 14-01-00007-a and the grant NSh-1410.2012.1. The work of D.~T.\ was partially supported by the RFBR grant 14-01-00012, by the RFBR-CNRS grant 11-01-93105-a, by the grant of the Dynasty foundation, by the grant RFBR-OFI-I2 13-01-12401 and by the grant for support of the scientific school 4833.2014.1.

\end{document}